\documentclass[12pt]{article}
\usepackage{amsmath}
\usepackage{amssymb}
\usepackage{amsfonts}
\usepackage{latexsym}
\usepackage{color}
\usepackage{graphicx}

\catcode `\@=11 \@addtoreset{equation}{section}

\catcode `\@=12



  \voffset1cm

\newcommand{\be}{\begin{equation}}
\newcommand{\en}{\end{equation}}
\newcommand{\bea}{\begin{eqnarray}}
\newcommand{\ena}{\end{eqnarray}}
\newcommand{\beano}{\begin{eqnarray*}}
\newcommand{\enano}{\end{eqnarray*}}
\newcommand{\bee}{\begin{enumerate}}
\newcommand{\ene}{\end{enumerate}}

\newcommand{\Hil}{{\cal H}}

\newcommand{\kt}{\rangle}
\newcommand{\br}{\langle}
\newcommand{\F}{{\cal F}}
\newcommand{\Lc}{{\cal L}}

\newcommand{\D}{{\cal D}}

\newcommand{\E}{{\cal E}}

\newcommand{\V}{{\cal V}}

\newcommand{\1}{1 \!\! 1}
\newtheorem{thm}{Theorem}
\newtheorem{cor}[thm]{Corollary}

\newenvironment{proof}{\noindent {\bf Proof:}}{\hfill$\Box$}

\textwidth16.5cm \textheight22cm
\hoffset-1cm \voffset-1cm

\begin{document}
\thispagestyle{empty}

\vspace*{1cm}

\begin{center}
{\Large \bf Susy for non-Hermitian Hamiltonians, with a view to coherent states}   \vspace{2cm}\\

{\large F. Bagarello}\\
\vspace*{.7cm}

\normalsize
${}^1$Dipartimento di Ingegneria, Universit\`a di Palermo, I-90128  Palermo, Italy, and \\

\vspace*{.2cm}
 INFN, Sezione di Napoli, Italy\\

\end{center}

\vspace*{0.5cm}

\begin{abstract}
\noindent We propose an extended version of supersymmetric quantum mechanics which can be useful if the Hamiltonian of the physical system under investigation is not Hermitian. The method is based on the use of two, in general different, superpotentials. Bi-coherent states of the Gazeau-Klauder type are constructed and their properties are analyzed. Some examples are also discussed, including an application to the Black-Scholes equation, one of the most important equations in Finance.
\end{abstract}

\vspace{2cm}

{\bf MSC Codes}:  81Q60; 81Sxx

\vfill

\newpage

\section{Introduction}\label{sect1}

Supersymmetric quantum mechanics (Susy qm, in the following) is nowadays a well analyzed approach which has proven to be quite useful in the attempt of constructing Hamiltonians whose eigenvalues and eigenvectors can be {\em easily} deduced, out of those of a given operator. The role of factorization in this procedure is crucial, and it is widely discussed. We refer to \cite{jun,coop,suku} for many results on Susy qm and to \cite{rosas} for an interesting review on the factorization method, with a very reach list of references. The essence is the following: we consider an operator $a=\frac{d}{dx}+w(x)$, acting on $\Hil\equiv\Lc^2(\mathbb{R})$, whose adjoint is $a^\dagger=-\frac{d}{dx}+w(x)$, at least if $w(x)$ is a real function, called {\em superpotential}. Needless to say, the domains of $a$ and $a^\dagger$, $D(a)$ and $D(a^\dagger)$, cannot be all of $\Hil$, since each function in these sets must be, at least, differentiable. This suggests that, in general, they are unbounded, since all closed bounded operators can be defined everywhere in $\Hil$. For instance, if we take $w(x)$ linear in $x$ as for harmonic oscillator, it is well known that $a$ and $a^\dagger$ are unbounded. However, all throughout this paper, we will not consider in details this aspect of the operators involved in our analysis, except when it will be essential. 

Two operators can now be introduced: $h_1=a^\dagger a$ and $h_2=aa^\dagger$. In the coordinate representation, these look like:
\be
h_1=a^\dagger a=-\frac{d^2}{dx^2}+v_1(x), \qquad h_2=aa^\dagger=-\frac{d^2}{dx^2}+v_2(x),
\label{11}\en
where
\be
v_1(x)=w^2(x)-w'(x),\qquad v_2(x)=w^2(x)+w'(x).
\label{12}\en
It is easy to check that $[a,a^\dagger]=h_2-h_1=2w'(x)$, which is zero only if the superpotential is constant. Notice that $h_1$ and $h_2$ are both Hermitian and non-negative: $\langle f,h_j f\rangle\geq0$, $j=1,2$, for all $f\in D(h_j)$, the domain of $h_j$. Hence all their eigenvalues are real and non-negative. It is clear that the two vacua of $a$ and $a^\dagger$ cannot be both square-integrable. In fact, assuming that $\varphi^{(1)}(x)$ and $\varphi^{(2)}(x)$ satisfy $a\varphi^{(1)}(x)=0$ and $a^\dagger \varphi^{(2)}(x)=0$, we find that
$$
\varphi^{(1)}(x)=N_{1}\exp\left\{-\int w(x)\right\}, \qquad \varphi^{(2)}(x)=N_{2}\exp\left\{\int w(x)\right\}.
$$
We see that, if $\varphi^{(1)}(x)\in \Hil$, then $\varphi^{(2)}(x)\notin \Hil$, and vice-versa. It may also happen, however, that neither $\varphi^{(1)}(x)$ nor $\varphi^{(2)}(x)$ belong to $\Hil$. This is when SUSY is {\em broken}. In this case all the eigenvalues of $h_j$ must be strictly positive and the spectra of $h_1$ and $h_2$ coincide: $E_n^{(1)}=E_n^{(2)}=:E_n$, \cite{coop}. When SUSY is not broken ({\em unbroken} SUSY), one can always rename the operators in such a way $\varphi^{(1)}(x)\in \Hil$, while $\varphi^{(2)}(x)\notin \Hil$. This is, in fact, the standard choice adopted in the literature. In this short review, we will restrict to the broken case, since this will be the more interesting situation for us, expecially in connection with the bicoherent states considered in Section \ref{sectGKBCS}. Hence, let us assume that $e_n^{(j)}(x)$ is an eigenstate of $h_j$ with eigenvalue $E_n>0$: $h_je_n^{(j)}=E_ne_n^{(j)}$. Then
\be
e_n^{(2)}=\frac{1}{\sqrt{E_n}}\, a\,e_n^{(1)}, \qquad e_n^{(1)}=\frac{1}{\sqrt{E_n}}\, a^\dagger e_n^{(2)}.
\label{13}\en
Of course, these formulas make sense since $E_n>0$ for all $n$.
Hence $a$ and $a^\dagger$ are not, in general, ladder operators. They rather map the o.n. basis $\E_1=\{e_n^{(1)}\}$ into the second o.n. basis $\E_2=\{e_n^{(2)}\}$, and vice-versa. Notice that $\E_j$ will be assumed to be bases quite often in this paper, even if this is not always true when dealing with eigenvectors of non-Hermitian operators, see \cite{baginbagbook} for physical Hamiltonians showing this feature.

Let us introduce now the operators
$$
H_0=\left(
\begin{array}{cc}
h_1 & 0 \\
0 & h_2 \\
\end{array}
\right),\qquad
Q_0=\left(
\begin{array}{cc}
0 & 0 \\
a & 0 \\
\end{array}
\right),\qquad
Q_0^\dagger=\left(
\begin{array}{cc}
0 & a^\dagger \\
0 & 0 \\
\end{array}
\right).
$$
Then, the following formulas are satisfied:
$$
[H_0,Q_0]=[H_0,Q_0^\dagger]=0,\quad Q_0^2={Q_0^\dagger}^2=0,\quad \{Q_0,Q_0^\dagger\}=H_0.
$$
Also, if we  put 
$$
\tilde e_n^{(+)}=\left(
\begin{array}{c}
e_n^{(1)}  \\
0  \\
\end{array}
\right), \qquad \tilde e_n^{(-)}=\left(
\begin{array}{c}
0  \\
e_n^{(2)}  \\
\end{array}
\right),
$$
then
\be
H_0\tilde e_n^{(\pm)}=E_n\tilde e_n^{(\pm)}, \qquad Q_0\tilde e_n^{(+)}=\sqrt{E_n}\tilde e_n^{(-)}, \qquad Q_0^\dagger\tilde e_n^{(-)}=\sqrt{E_n}\tilde e_n^{(+)},
\label{14}\en
while $Q_0\tilde e_n^{(-)}=Q_0^\dagger\tilde e_n^{(+)}=0$. Many more details and examples of this (and similar) structure can be found in the literature on Susy qm, see \cite{jun,coop} in particular. 

In this paper we will extend this setting to the case in which an Hamiltonian $H_1$, replacing $h_1$ above, can still be factorized, but in terms of two  unrelated operators $A$ and $B$: $H_1=BA$, with $A\neq B^\dagger$. This implies, of course, that $H_1$ is not Hermitian, but opens interesting possibilities as, for instance, having zero-eigenvalue vacua for both $H_1$ and for its supersymmetric partner $H_2=AB$, as we will see. This is the content of Section \ref{sect2}. In Section \ref{sectdeformSUSY} we discuss how our framework can be deduced from ordinary SUSY using a bounded deformation operator, with bounded inverse. Section \ref{sectGKBCS} contains some preliminary results on bicoherent states of the Gazeau-Klauder type, \cite{gk}, with an application to the Swanson model, \cite{swan}. Examples are discussed in Section \ref{sectPBexample}, while our conclusions are given in Section \ref{sectcon}.

\section{The general settings}\label{sect2}

Let us consider two operators $A$ and $B$ defined as follows:
\be
A=\frac{d}{dx}+w_A(x), \qquad B=-\frac{d}{dx}+w_B(x),
\label{21}\en
with $w_A(x)$ and $w_B(x)$ in principle complex functions, and sufficiently regular\footnote{Regularity of $w_{A,B}(x)$ is required to make our computations meaningful. For instance, see (\ref{23}), they must admit at least the first derivative.}.
Of course,  if $w_A(x)=w_B(x)=w(x)$ we have $A=B^\dagger$ and these both coincide with $a$ in Section \ref{sect1} while, if $w_A(x)\neq w_B(x)$, $A$ and $B^\dagger$ are different. We still call these functions {\em superpotentials}. It should be stated clearly that, as $a$ and $a^\dagger$, also $A$ and $B$ are unbounded operators, being not everywhere defined in $\Hil$. It is now an easy computation to check that
\be
H_1=BA=-\frac{d^2}{dx^2}+q_1(x)\frac{d}{dx}+V_1(x), \qquad H_2=AB=-\frac{d^2}{dx^2}+q_1(x)\frac{d}{dx}+V_2(x),
\label{22}\en
where
\be
q_1(x)=w_B(x)-w_A(x), \quad V_1(x)=w_A(x)w_B(x)-w'_A(x), \quad V_2(x)=w_A(x)w_B(x)+w'_B(x).
\label{23}\en
It is obvious that, even if $w_{A,B}(x)$ are real functions, $H_1$ and $H_2$ are manifestly non-Hermitian, due to the presence of the term $q_1(x)\frac{d}{dx}$, which disappears only if $w_A(x)=w_B(x)$. This is exactly the situation in which $B^\dagger=A$, and we go back to ordinary Susy qm, see Section \ref{sect1}. It is also possible to {\em remove} the first derivative term by a suitable transformation of $H_1$ or $H_2$. This was discussed for the Black-Scholes equation, \cite{roy,roy2,bag2016}, where it is also shown that this transformation is implemented by an unbounded operator, with unbounded inverse. In what follows, we are more interested to considering different superpotentials since this will allow us to produce new results.

First of all, it is clear that, if we know the vacuum of $A$, i.e. the function satisfying the equation $A\varphi_0^{(1)}(x)=0$, then we can deduce the superpotential $w_A(x)$: $w_A(x)=-\frac{\frac{d}{dx}\varphi_0^{(1)}(x)}{\varphi_0^{(1)}(x)}$. Analogously, if $\varphi_0^{(2)}(x)$ is the vacuum of $B$, $B\varphi_0^{(2)}(x)=0$, then $w_B(x)=\frac{\frac{d}{dx}\varphi_0^{(2)}(x)}{\varphi_0^{(2)}(x)}$. Of course, these formulas make sense if $\varphi_0^{(1)}(x)$ and $\varphi_0^{(2)}(x)$ are never zero. Viceversa, knowing the superpotential it is possible to deduce the two vacua:
\be
\varphi_0^{(1)}(x)=N_{\varphi(1)}\,\exp\left\{-\int w_A(x)\right\}, \qquad \varphi_0^{(2)}(x)=N_{\varphi(2)}\,\exp\left\{\int w_B(x)\right\}.
\label{24}\en
Here $N_{\varphi(1)}$ and $N_{\varphi(2)}$ are two normalization constants\footnote{Calling $N_{\varphi(1)}$ and $N_{\varphi(2)}$ {\em normalization} constants could be not really appropriate, since it may happen that $\varphi_0^{(1)}(x)$ or $\varphi_0^{(2)}(x)$, or both, are not in $\Lc^2(\mathbb{R})$.}. Of course, since $w_A(x)$ and $w_B(x)$ are not necessarily connected, it may be true that both $\varphi_0^{(1)}(x)$ and $\varphi_0^{(2)}(x)$ are square integrable. For instance, if $\varphi_0^{(1)}(x)\in\Hil$,  a trivial choice which guarantees this result is $w_B(x)=-w_A(x)$. However, this is {\em too trivial}, since it implies that $B=-A$, $V_1(x)=V_2(x)$ and $H_1=H_2$. On the other hand, if we take  $w_B(x)=-\alpha w_A(x)$, for some $\alpha>0$, $\alpha\neq 1$, the situation becomes more interesting since $V_1(x)\neq V_2(x)$ and both $\varphi_0^{(1)}(x)$ and $\varphi_0^{(2)}(x)$ belong to $\Hil$. Another possible choice\footnote{Suggested by the unkonwn Referee. Thank you!} is  $w_B(x)=-\overline{w_A(x)}$, which returns a real $q_1(x)$ and $V_2(x)=\overline{V_1(x)}$.

The commutator between $A$ and $B$ is  the difference between the two Hamiltonians:
\be
[A,B]=H_2-H_1=V_2(x)-V_1(x)=w_A'(x)+w_B'(x),
\label{25}\en
which extends what deduced for $[a,a^\dagger]$. In particular, if $w_A(x)+w_B(x)$ is linear in $x$, we recover the pseudo-bosonic commutation rule, $[A,B]\propto\1$, and several interesting results can be deduced, see \cite{baginbagbook,bag,pbs} and references therein, and \cite{bgs} for a more recent results. We will consider this particular case in Section \ref{sectPBexample}. It is known that, when we deal with pseudo-bosons, $A$, $B$ and their adjoint act as ladder operators, so that the full families of eigenstates for $H_1$ and $H_2$ can be explicitly constructed in a rather automatic way, as one does for the harmonic oscillator. On the other hand, if $w_A(x)+w_B(x)$ is not linear in $x$, then this is not possible, in general, and the eigenvectors should be constructed using some alternative strategy, if any. For the moment,  we assume that, in {\em some way}, we know the eigenvectors of $H_1$ and $H_2$, and their related eigenvalues:
\be
H_1\varphi_n^{(1)}(x)=E_n^{(1)}\varphi_n^{(1)}(x), \qquad H_2\varphi_n^{(2)}(x)=E_n^{(2)}\varphi_n^{(2)}(x),
\label{26}\en
for all $n\geq0$. We are assuming also that $\varphi_0^{(1)}(x)$ and $\varphi_0^{(2)}(x)$ belong to $\Hil$, so that $E_0^{(1)}=E_0^{(2)}=E_0=0$. As for $h_1$ and $h_2$, it is possible to prove that $E_n^{(1)}=E_n^{(2)}$ for all $n\geq0$. In fact, since $\varphi_n^{(1)}(x)$ is an eigenstate of $H_1$ with eigenvalue $E_n^{(1)}$, then  $A\varphi_n^{(1)}(x)$ is  eigenstate of $H_2$ with the same eigenvalue $E_n^{(1)}$. Analogously, since  $\varphi_m^{(2)}(x)$ is an eigenstate of $H_2$ with eigenvalue $E_m^{(2)}$, then  $B\varphi_m^{(2)}(x)$ is  eigenstate of $H_1$ with the same eigenvalue $E_m^{(1)}$. Now, with a clever reordering of the eigenvalues of, say, $H_2$, we conclude that $E_n^{(1)}=E_n^{(2)}=E_n$ also for $n>0$, and that
\be
A\varphi_n^{(1)}(x)=\alpha_n\varphi_n^{(2)}(x), \qquad B\varphi_n^{(2)}(x)=\beta_n\varphi_n^{(1)}(x),
\label{27}\en
 with $\alpha_n\beta_n=E_n$, for all $n\geq1$.
 
 \vspace{2mm}
 
 {\bf Remark:--} (1) Of course, we can consider the first equation in (\ref{27}) as the {\em defining relation} for $\varphi_n^{(2)}(x)$. In other words, it is not really needed to know the eigenvectors of both $H_1$ and $H_2$. In fact, from (\ref{27}) we see that the knowledge of one family is enough to deduce also the second set.

 \vspace{1mm}

 (2) Equation (\ref{27}) also holds for $n=0$ if $E_0>0$. In this case, however, the vacua $\varphi_0^{(j)}(x)$  cannot be those in (\ref{24}), of course, since they are not compatible with the fact that we should now have, for instance, $A\varphi_0^{(1)}(x)\neq0$.
 
 \vspace{2mm}

With respect to ordinary Susy qm, we have two more Hamiltonians which are interesting for us, $H_1^\dagger$ and $H_2^\dagger$. We find
\be
H_1^\dagger=A^\dagger B^\dagger=-\frac{d^2}{dx^2}-\overline{q_1(x)}\frac{d}{dx}+\V_1(x), \qquad H_2^\dagger=B^\dagger A^\dagger=-\frac{d^2}{dx^2}-\overline{q_1(x)}\frac{d}{dx}+\V_2(x),
\label{28}\en
where
\be
\V_1(x)=\overline{w_A(x)w_B(x)-w'_B(x)}, \qquad \V_2(x)=\overline{w_A(x)w_B(x)+w'_A(x)}.
\label{29}\en
It is clear that, in general, these potentials are different from those in (\ref{23}). However, it is easy to see that each $H_j^\dagger$ has the same expression as $H_j$ with $w_{A,B}(x)$ replaced by $\overline{w_{B,A}(x)}$.
This suggests that we could repeat, for $H_1^\dagger$ and $H_2^\dagger$, what we have done for $H_1$ and $H_2$. In particular, we could look for the vacua of $A^\dagger$ and $B^\dagger$ and check under which conditions they are (both, possibly) in $\Hil$. Alternatively, we could make use of the following result, which gives us conditions for $H_j^\dagger$ to have eigenvectors, and how these functions should be related to the eigenvectors of $H_j$.

\begin{thm}\label{thm1}
	Suppose $\F_{\varphi(1)}=\{\varphi_n^{(1)}(x)\}$ is a basis of $\Hil$. Then there exist an unique set $\F_{\psi(1)}=\{\psi_n^{(1)}(x)\}$ which is also a basis of $\Hil$ and such that $\F_{\varphi(1)}$ and $\F_{\psi(1)}$ are biorthogonal. Moreover, $\psi_n^{(1)}(x)$ is eigenstate of $H_1^\dagger$ with eigenvalue $\overline{E_n}$: $H_1^\dagger \psi_n^{(1)}(x)=\overline{E_n}\,\psi_n^{(1)}(x)$. 
	A similar statement holds for $\F_{\varphi(2)}=\{\varphi_n^{(2)}(x)\}$.
\end{thm}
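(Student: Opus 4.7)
The plan is to split the statement into two parts. The first part — existence and uniqueness of a biorthogonal sequence $\F_{\psi(1)}$ to the given basis $\F_{\varphi(1)}$ — is a standard fact about Schauder bases in Hilbert space: define linear functionals $\ell_n$ on the dense span of $\F_{\varphi(1)}$ by $\ell_n(\varphi_m^{(1)})=\delta_{n,m}$, use the basis property to extend them continuously to all of $\Hil$, and invoke the Riesz representation theorem to identify each $\ell_n$ with a vector $\psi_n^{(1)}\in\Hil$. Uniqueness follows because two such sequences must agree on the dense span of $\F_{\varphi(1)}$, and the fact that $\F_{\psi(1)}$ is itself a basis can be cited from the general theory of biorthogonal systems in Hilbert space (see e.g. \cite{baginbagbook}).

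The substantive part is the eigenvalue claim, and the natural approach is to expand $H_1^\dagger\psi_n^{(1)}$ along the basis $\F_{\psi(1)}$ and compute the coefficients using biorthogonality. Writing $H_1^\dagger\psi_n^{(1)}=\sum_k c_{nk}\,\psi_k^{(1)}$, the biorthogonality relation $\langle\varphi_k^{(1)},\psi_j^{(1)}\rangle=\delta_{k,j}$ gives $c_{nk}=\langle\varphi_k^{(1)},H_1^\dagger\psi_n^{(1)}\rangle$. Now transfer $H_1^\dagger$ to the left entry of the scalar product and apply the eigenvalue equation $H_1\varphi_k^{(1)}=E_k\varphi_k^{(1)}$ from (\ref{26}):
$$
c_{nk}=\langle H_1\varphi_k^{(1)},\psi_n^{(1)}\rangle=\overline{E_k}\,\langle\varphi_k^{(1)},\psi_n^{(1)}\rangle=\overline{E_k}\,\delta_{k,n}=\overline{E_n}\,\delta_{k,n}.
$$
Only the $k=n$ term survives, so $H_1^\dagger\psi_n^{(1)}=\overline{E_n}\,\psi_n^{(1)}$, as claimed. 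The corresponding statement for $\F_{\varphi(2)}$ is proved by the same argument with $H_2$ and $H_2^\dagger$ in place of $H_1$ and $H_1^\dagger$.

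The main obstacle is not algebraic but domain-theoretic: the manipulation $\langle\varphi_k^{(1)},H_1^\dagger\psi_n^{(1)}\rangle=\langle H_1\varphi_k^{(1)},\psi_n^{(1)}\rangle$ presupposes $\psi_n^{(1)}\in D(H_1^\dagger)$ and $\varphi_k^{(1)}\in D(H_1)$, and the series expansion of $H_1^\dagger\psi_n^{(1)}$ requires that vector to lie in $\Hil$ in the first place. Since $H_1$ is an unbounded second-order differential operator with possibly complex coefficients, these inclusions are genuinely nontrivial and would have to be verified case by case in concrete models. Consistently with the convention stated in Section \ref{sect1} and just before (\ref{26}), I would assume the relevant domain inclusions hold, flag this assumption explicitly, and proceed with the formal computation above, leaving verification to the examples treated later in Section \ref{sectPBexample}.
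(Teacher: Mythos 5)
Your proof is correct and follows essentially the same route as the paper: both hinge on the identity $\langle H_1\varphi_m^{(1)},\psi_n^{(1)}\rangle=\langle \varphi_m^{(1)},H_1^\dagger\psi_n^{(1)}\rangle$ combined with biorthogonality, the paper then concluding directly from completeness of $\F_{\varphi(1)}$ that $(H_1^\dagger-\overline{E_n})\psi_n^{(1)}=0$, while you phrase the same computation as evaluating the coefficients of $H_1^\dagger\psi_n^{(1)}$ in the basis $\F_{\psi(1)}$. Your explicit flagging of the domain assumptions ($\varphi_m^{(1)}\in D(H_1)$, $\psi_n^{(1)}\in D(H_1^\dagger)$) is a reasonable addition, since the paper leaves these implicit.
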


\begin{proof}

The existence of an  basis $\F_{\psi(1)}=\{\psi_n^{(1)}(x)\}$ which is biorthogonal to $\F_{\varphi(1)}$ is granted, see \cite{chri}.  The fact that its vectors are eigenstates of $H_1^\dagger$ is a consequence of the completeness of $\F_{\varphi(1)}$. In fact, since $\br  \psi_n^{(1)},\varphi_m^{(1)}\kt=\delta_{n,m}$,
$$
\br H_1^\dagger \psi_n^{(1)},\varphi_m^{(1)}\kt=\br \psi_n^{(1)},H_1\varphi_m^{(1)}\kt=E_m\br  \psi_n^{(1)},\varphi_m^{(1)}\kt=E_n\br  \psi_n^{(1)},\varphi_m^{(1)}\kt=\br \overline{E_n} \psi_n^{(1)},\varphi_m^{(1)}\kt.
$$
Hence $\br (H_1^\dagger-\overline{E_n}) \psi_n^{(1)},\varphi_m^{(1)}\kt=0$, for all $m$. But, since the set $\F_{\varphi(1)}$ is complete, $ (H_1^\dagger-\overline{E_n}) \psi_n^{(1)}=0$ for all $n$. Our claim follows. Of course the proof for $H_2$ is completely analogous.

\end{proof}

The counterpart of formula (\ref{27}) can be deduced also for the sets $\F_{\psi(j)}$, $j=1,2$. In particular we have 
\be
B^\dagger\psi_n^{(1)}(x)=\overline{\beta_n}\psi_n^{(2)}(x), \qquad A^\dagger\psi_n^{(2)}(x)=\overline{\alpha_n}\psi_n^{(1)}(x),
\label{210}\en
where $\alpha_n$ and $\beta_n$ are those already introduced. It is useful to draw the following picture:

\vspace*{.1cm}


\begin{picture}(450,90)
\put(5,35){\thicklines\line(1,0){120}}
\put(5,55){\thicklines\line(1,0){120}}
\put(5,35){\thicklines\line(0,1){20}}
\put(125,35){\thicklines\line(0,1){20}}
\put(65,45){\makebox(0,0){$H_1=BA$, $(E_n,\varphi_n^{(1)})$}}

\put(305,35){\thicklines\line(1,0){120}}
\put(305,55){\thicklines\line(1,0){120}}
\put(305,35){\thicklines\line(0,1){20}}
\put(425,35){\thicklines\line(0,1){20}}
\put(365,45){\makebox(0,0){$H_2=AB$, $(E_n,\varphi_n^{(2)})$}}

\put(5,-55){\thicklines\line(1,0){120}}
\put(5,-35){\thicklines\line(1,0){120}}
\put(5,-55){\thicklines\line(0,1){20}}
\put(125,-55){\thicklines\line(0,1){20}}
\put(65,-45){\makebox(0,0){$H_1^\dagger=A^\dagger B^\dagger$, $(\overline{E_n},\psi_n^{(1)})$}}

\put(305,-55){\thicklines\line(1,0){120}}
\put(305,-35){\thicklines\line(1,0){120}}
\put(305,-55){\thicklines\line(0,1){20}}
\put(425,-55){\thicklines\line(0,1){20}}
\put(365,-45){\makebox(0,0){$H_2^\dagger=B^\dagger A^\dagger$, $(\overline{E_n},\psi_n^{(2)})$}}

\multiput(70,28)(0,-6){9}{\line(0,-1){3}}
\put(70,-27){\vector(0,-1){3}}
\put(70,27){\vector(0,1){3}}
\put(60,0){\makebox(0,0){$\dagger$}}

\multiput(370,28)(0,-6){9}{\line(0,-1){3}}
\put(370,-27){\vector(0,-1){3}}
\put(370,27){\vector(0,1){3}}
\put(380,0){\makebox(0,0){$\dagger$}}

\put(130,44){\vector(1,0){170}}
\put(130,-45){\vector(1,0){170}}
\put(130,44){\vector(-1,0){2}}
\put(130,-45){\vector(-1,0){2}}
\put(210,52){\makebox(0,0){SUSY}}
\put(210,-54){\makebox(0,0){SUSY}}

\end{picture}

\vspace*{3cm}

This diagram  shows the effects of SUSY (horizontal lines) and of the adjoint map (vertical lines): SUSY exchanges the order of the operators factorizing the various Hamiltonians mapping the "(1)" into the "(2)" sets of vectors, and vice-versa, while keeping unchanged the eigenvalues. On the other hand, the (vertical) $\dagger$ maps each Hamiltonian into its adjoint. This operation implies the replacement of the eigenvalues with their complex conjugate, and the sets $\F_{\varphi(j)}$ with $\F_{\psi(j)}$, and vice-versa.

As in ordinary SUSY we can introduce the operators
\be
H=\left(
\begin{array}{cc}
H_1 & 0 \\
0 & H_2 \\
\end{array}
\right),\qquad
Q_A=\left(
\begin{array}{cc}
0 & 0 \\
A & 0 \\
\end{array}
\right),\qquad
Q_B=\left(
\begin{array}{cc}
0 & B \\
0 & 0 \\
\end{array}
\right).
\label{211}\en
Then, 
\be
[H,Q_A]=[H,Q_B]=0,\quad Q_A^2=Q_B^2=0,\quad \{Q_A,Q_B\}=H,
\label{212}\en
with similar equalities satisfied by $H^\dagger$, $Q_A^\dagger$ and $Q_B^\dagger$. For instance, $\{Q_A^\dagger,Q_B^\dagger\}=H^\dagger$.
Also, if we further put 
$$
\tilde \varphi_n^{(+)}=\left(
\begin{array}{c}
\varphi_n^{(1)}  \\
0  \\
\end{array}
\right), \quad \tilde \varphi_n^{(-)}=\left(
\begin{array}{c}
0  \\
\varphi_n^{(2)}  \\
\end{array}
\right),\quad \tilde \psi_n^{(+)}=\left(
\begin{array}{c}
\psi_n^{(1)}  \\
0  \\
\end{array}
\right), \quad \tilde \psi_n^{(-)}=\left(
\begin{array}{c}
0  \\
\psi_n^{(2)}  \\
\end{array}
\right),
$$
we deduce that
\be
H\tilde \varphi_n^{(\pm)}=E_n\tilde \varphi_n^{(\pm)}, \qquad H^\dagger\tilde \psi_n^{(\pm)}=\overline{E_n}\,\tilde \psi_n^{(\pm)},
\label{213}\en
and
\be
Q_A\tilde \varphi_n^{(+)}=\alpha_n\tilde \varphi_n^{(-)}, \quad Q_B\tilde \varphi_n^{(-)}=\beta_n\tilde \varphi_n^{(+)},\quad Q_A^\dagger\tilde \psi_n^{(-)}=\overline{\alpha_n}\tilde \psi_n^{(+)}, \quad Q_B^\dagger\tilde \psi_n^{(+)}=\overline{\beta_n}\tilde \psi_n^{(-)},
\label{214}\en
while $Q_A\tilde \varphi_n^{(-)}=Q_B\tilde \varphi_n^{(+)}=Q_A^\dagger\tilde \psi_n^{(+)}=Q_B^\dagger\tilde \psi_n^{(-)}=0$. Formula (\ref{214}) shows how the various $Q$'s map fermionic into bosonic vectors, and vice-versa. We conclude that the essential characteristics of ordinary SUSY qm are recovered in the present setting. Still, our results look somehow richer, since the adjoint map has interesting features both for its physical consequences (the differences between, say, an Hamiltonian and its adjoint have been considered in many applications to, e.g., quantum mechanical gain and loss systems, see \cite{ben,mosta} and references therein) and from the mathematical side (many mathematical aspects of non self-adjoint operators have been considered in \cite{bagbook}). 

\section{Deformed ordinary SUSY qm}\label{sectdeformSUSY}

In this section we will show how operators like those in (\ref{21}) can be easily obtained by a suitable deformation of ordinary SUSY qm, using some kind of similarity map implemented by an invertible (but possibly non unitary) operator. This is not particularly different from what we can find in connection with some non self-adjoint Hamiltonians which are often considered in PT or pseudo-Hermitian quantum mechanics, \cite{ben,mosta}, which are deduced as (bounded or unbounded) deformations of some Hermitian operator. However, to keep the mathematical aspects of the problem under control, in what follows we will assume that the operator implementing the deformation is bounded, with bounded inverse. For that we consider a regular (at least differentiable) complex-valued function $q(x)=q_r(x)+i\,q_i(x)$, where $q_r(x)=\Re\{q(x)\}$ and $q_i(x)=\Im\{q(x)\}$, whose real part is bounded from below and from above: two strictly positive constants $m$ and $M$ exist such that
$$
0<m\leq q_r(x)\leq M<\infty.
$$
Then we define the following multiplication operator $T$, and its inverse:
\be
(Tf)(x)=e^{q(x)}f(x), \qquad (T^{-1}f)(x)=e^{-q(x)}f(x).
\label{31}\en
It is easy to check that
\be
\|T\|\leq e^M, \qquad \|T^{-1}\|\leq e^{-m}.
\label{32}\en
Therefore $f(x)$ in (\ref{31}) can be taken arbitrarily in $\Hil$. In other words, $D(T)=D(T^{-1})=\Lc^2(\mathbb{R})$. Now, if we call $A=TaT^{-1}$ and $B=Ta^\dagger T^{-1}$, where $a$ and $a^\dagger$ are those introduced in Section \ref{sect1}, we obtain exactly the operators in (\ref{21}) with
\be
w_A(x)=w(x)-q'(x), \qquad w_B(x)=w(x)+q'(x).
\label{33}\en
With this choice the Hamiltonians $H_1$ and $H_2$ in (\ref{22}) have 
\be
q_1(x)=2q'(x), \,\, V_1(x)=w^2(x)-w'(x)-(q'(x))^2+q''(x), \,\,  V_2(x)=w^2(x)+w'(x)-(q'(x))^2+q''(x),
\label{34}\en
while the potentials $\V_1(x)$ and $\V_2(x)$ in (\ref{29}) turn out to be
\be
\V_1(x)=w^2(x)-w'(x)-(\overline{q'(x)})^2-\overline{q''(x)}, \qquad  \V_2(x)=w^2(x)+w'(x)-(\overline{q'(x)})^2-\overline{q''(x)},
\label{35}\en
where we have used the fact that $w(x)$ is real. Let us now define, out of the o.n. bases $\E_1$ and $\E_2$, see Section \ref{sect1}, the following vectors and their related sets:
\be
\varphi_n^{(j)}(x)=(Te_n^{(j)})(x)=e^{q(x)}e_n^{(j)}, \qquad \psi_n^{(j)}(x)=((T^{-1})^\dagger e_n^{(j)})(x)=e^{-\overline{q(x)}}e_n^{(j)},
\label{36}\en
and $\F_{\varphi(j)}=\{\varphi_n^{(j)}(x)\}$, $\F_{\psi(j)}=\{\psi_n^{(j)}(x)\}$, $j=1,2$. Due to the boundedness of $T$ and $T^{-1}$, each pair $(\F_{\varphi(j)}, \F_{\psi(j)})$ is a biortogonal Riesz basis, the best we can have after o.n. bases, \cite{chri}. Of course, this would not be true if $q_r(x)$ does not satisfy the inequalities given at the beginning of this section. In particular, as a consequence of (\ref{32}), we deduce that
\be
\|\varphi_n^{(j)}\|\leq e^M, \qquad \|\psi_n^{(j)}\|\leq e^{-m}, 
\label{37}\en
$j=1,2$.
Moreover, they are eigenstates of the various Hamiltonians we have introduced so far. More in detail,
\be
H_j\varphi_n^{(j)}(x)=E_n\varphi_n^{(j)}(x), \qquad H_j^\dagger\psi_n^{(j)}(x)=E_n\psi_n^{(j)}(x),
\label{38}\en
$j=1,2$. Notice that the eigenvalues are all real and, therefore, coincident, even if the Hamiltonians are manifestly non Hermitian. This is because each $H_j$ is defined as a deformation of an Hermitian operator, $h_j$, whose eigenvalues are necessarily real and non negative. This aspect was already commented in Section \ref{sect1}.

Straightforward computations show that, for instance, equations (\ref{27}) and (\ref{210}) are satisfied with the choice $\alpha_n=\beta_n=\sqrt{E_n}$. Also, if we introduce the matrix
$$
T_2=\left(
\begin{array}{cc}
T & 0 \\
0 & T \\
\end{array}
\right),
$$
then $T_2$ is invertible and the following equalities between the quantities introduced in Sections \ref{sect1} and \ref{sect2} hold:
$$
H=T_2H_0T_2^{-1}, \quad Q_A=T_2Q_0T_2^{-1}, \quad Q_B=T_2Q_0^\dagger T_2^{-1}, \quad \tilde \varphi_n^{(\pm)}=T_2 \tilde e_n^{(\pm)}\quad \tilde \psi_n^{(\pm)}=(T_2^{-1})^\dagger \tilde e_n^{(\pm)}.
$$
Hence there exists a sort of equivalence between standard SUSY qm and what has been deduced in Section \ref{sect2}, at least under our assumptions on $q_r(x)$. It is well known, however, that this is not really so evident if $T$,  $T^{-1}$, or both, are unbounded. This is the case if $q_r(x)$ is not bounded but is only, for instance, semi-bounded. In this case $\F_{\varphi(j)}$, $\F_{\psi(j)}$ are not Riesz bases, and may also not be even bases for $\Hil$. We refer to \cite{baginbagbook,bagbell} for some results on this aspect in the context of non-Hermitian Hamiltonians. In the next Section we will show how the vectors in (\ref{36}) can be used to introduce a certain class of bi-coherent states.

\section{Gazeau-Klauder-like bicoherent states}\label{sectGKBCS}

In this section we will show how some sort of coherent states can be naturally {\em attached} to the general framework discussed so far. In particular, we will significantly extend what we have done in \cite{bag2008}, where the idea of using a particular definition of coherent states, due to Gazeau and Klauder \cite{gk}, was already considered. It is maybe worth stressing that  many kind of coherent states have been introduced and studied during the years, with different features and related to different physical systems. We refer to the following monographs, \cite{gazeaubook}-\cite{aag}, and to the  recent volume \cite{cpcs}. In our knowledge, all the coherent states proposed so far are  eigenvectors of some lowering operator and satisfy some sort of resolution of the identity. In recent years, with the growing interest for non-Hermitian Hamiltonians, many attempts to define coherent states also in this case have been carried out, see the recent paper \cite{deyetc} for instance. In particular, we have proposed one of these extentions, the so-called bi-coherent states, see \cite{bgs} and references therein, which have a lot of nice properties. Here, extending what we did in \cite{bag2008}, we propose a different kind of bi-coherent states which we call {\em of the Gazeau-Klauder type}, \cite{gk}, which are rather different from those proposed in \cite{fhro}-\cite{fring}.

Let $\F_{\varphi(j)}=\{\varphi_n^{(j)}(x)\}$ be a basis\footnote{The reader could have in mind the sets in Theorem \ref{thm1}. However, most of what we will discuss in this section holds true independently of the origin of these vectors.} for $\Hil$ such that $H_j\varphi_n^{(j)}(x)=E_n\varphi_n^{(j)}(x)$. Let then $\F_{\psi(j)}=\{\psi_n^{(j)}(x)\}$ be the unique biorthogonal basis associated to $\F_{\varphi(j)}$, $j=1,2$, \cite{chri}. We have already shown in Theorem \ref{thm1} that $H_j^\dagger\psi_n^{(j)}(x)=\overline{E_n}\psi_n^{(j)}(x)$. Let us now define the following vectors
\be
\varphi^{(j)}(J,\gamma;x)=K(J)\,\sum_{n=0}^\infty \frac{J^{n/2}e^{-iE_n\gamma}}{\sqrt{\rho_n}}\varphi_n^{(j)}(x), \quad 
\psi^{(j)}(J,\gamma;x)=K(J)\,\sum_{n=0}^\infty \frac{J^{n/2}e^{-i\overline{E_n}\gamma}}{\sqrt{\rho_n}}\psi_n^{(j)}(x),
\label{41}\en
where $J\geq 0$, $\gamma\in\mathbb{R}$, and $\rho_n$ is defined as follows: $\rho_0=1$ and $\rho_n=E_1 E_2\cdots E_n$. Since these quantities are, in general, complex, we need to clarify what we mean for $\sqrt{\rho_n}$. We make the easiest choice: if $\rho_n=|\rho_n|e^{i\theta_n}$, then $\sqrt{\rho_n}=|\rho_n|^{1/2}e^{i\theta_n/2}$. Of course, there is no problem at all if the sets  $\F_{\varphi(j)}$ and  $\F_{\psi(j)}$ are those introduced in Section \ref{sectdeformSUSY}, since all the $E_n$ are non negative. In (\ref{41}) we have also introduced the following normalization function:
\be
K(J)=\left(\sum_{n=0}^{\infty}\frac{J^n}{|\rho_n|}\right)^{-1/2},
\label{42}\en
which we take coincident for $\varphi^{(j)}(J,\gamma;x)$ and for $\psi^{(j)}(J,\gamma;x)$. The series converge if $J<R$, where $R=\lim_{n,\infty}|E_{n}|$. Incidentally we observe that $K$ does not depend on $\gamma$ or $x$. It only depends on $J$. Of course, taking $J<R$, does not ensure us that also the two series in (\ref{41}) converge. In fact, something else should be assumed on the norms of $\varphi_n^{(j)}(x)$ and $\psi_n^{(j)}(x)$. We adopt here the same very mild assumptions considered in \cite{bgs2}: let us suppose that four strictly positive constants exist, $A_\varphi$,  $A_\psi$,  $r_\varphi$,  $r_\psi$, together with two sequences $\{M_n(\varphi)\}$ and $\{M_n(\psi)\}$, such that
$$
\lim_{n,\infty}\frac{M_n(\varphi)}{M_{n+1}(\varphi)}=M(\varphi)\in]0,\infty], \qquad \lim_{n,\infty}\frac{M_n(\psi)}{M_{n+1}(\psi)}=M(\psi)\in]0,\infty],
$$
and
\be
\|\varphi_n^{(j)}\|\leq A_\varphi\,r_\varphi^n M_n(\varphi), \qquad\qquad \|\psi_n^{(j)}\|\leq A_\psi\,r_\psi^n M_n(\psi),
\label{43}\en
for all $n\geq 0$. Suppose further that, calling $E_n=E_n^{(r)}+i\,E_n^{(i)}$, the following holds:
\be
\delta E=\lim_{n,\infty}(E_n^{(i)}-E_{n+1}^{(i)})=0.
\label{44}\en
Of course, this is true if $E_n$ is real (at least for $n$ large enough), or if the imaginary part of $E_n$ is constant (up to at most a finite number of $n$), or yet if the sequence $\{E_n^{(i)}\}$ decays to zero. Then we have, for instance,
$$
\|\varphi^{(j)}\|\leq K(J) A_\varphi \,\sum_{n=0}^\infty \frac{M_n(\varphi)e^{E_n^{(i)}\gamma}}{\sqrt{|\rho_n|}}(\sqrt{J}\,r_\varphi)^n,
$$
where we have used the fact that $K(J)$ is positive. The series on the right-hand side converges if $J<J_\varphi:=\frac{M^2(\varphi)R}{r_\varphi}$, independently of $\gamma$. Analogously, the series for $\|\psi^{(j)}\|$ converges for all $\gamma$ if $J<J_\psi:=\frac{M^2(\psi)R}{r_\psi}$. Hence we can conclude that the vectors in (\ref{41}) are well defined for all $\gamma$, if $J<J_{min}=\min(R,J_\varphi,J_\psi)$. 

Now that we know the domain in which these states are defined, we are interested in deducing their properties. In the following we will call $C$ the following subset of $\mathbb{R}^2$: $C=\{(J,\gamma): J\in[0,J_{min}[, \,\gamma \in \mathbb{R}\}$.

First of all, a direct computation shows that, thanks to our choice of $K(J)$,
\be
\br \varphi^{(j)}(J,\gamma;x),\psi^{(j)}(J,\gamma;x)\kt=1,
\label{45}\en
for all $(j,\gamma)\in C$. This is a direct consequence of the biorthogonality of the families $\F_{\varphi(j)}$ and $\F_{\psi(j)}$. This kind of {\em normalization in pairs} is typical of biorthogonal sets, \cite{baginbagbook,chri}. Now, following \cite{gk}, we introduce the following measure on $C$: $d\nu(J,\gamma)=K^{-2}(J)\rho(J)dJ\,d\nu(\gamma)$, where $\rho(J)$ is a solution of the moment problem
\be
\int_{0}^{J_{min}}J^n\,\rho(J)\,dJ=|\rho_n|,
\label{46}\en
while $d\nu(\gamma)$ is defined as follows, \cite{gk}: $$\int_{\mathbb{R}} \cdots d\nu(\gamma)=\lim_{\Gamma,\infty}\frac{1}{2\Gamma}\int_{-\Gamma}^{\Gamma}\cdots d\gamma.$$
Then, if each $E_n$ has multiplicity one, it is possible to check that
$$
\int_{C}d\nu(J,\gamma)\br f, \varphi^{(j)}(J,\gamma;x) \kt\br  \psi^{(j)}(J,\gamma;x),g\kt=$$
\be =\int_{C}d\nu(J,\gamma)\br f, \psi^{(j)}(J,\gamma;x) \kt\br  \varphi^{(j)}(J,\gamma;x),g\kt=\br f,g\kt,
\label{47}\en
for all $f,g\in\Hil$. Hence the two families in (\ref{41}) resolve the identity. 

Another useful property of these states can be deduced if $e^{-iH_jt}$ and $e^{-iH_j^\dagger t}$ commute with the series in (\ref{41}). In this case, in fact, we deduce that 
\be
e^{-iH_jt}\varphi^{(j)}(J,\gamma;x)=\varphi^{(j)}(J,\gamma+t;x), \qquad e^{-iH_j^\dagger t}\psi^{(j)}(J,\gamma;x)=\psi^{(j)}(J,\gamma+t;x).
\label{48}
\en
This means that our Gazeau-Klauder-like bicoherent states are stable under time evolution. Last but not least, they also satisfy the following generalized version of the action identity, \cite{gk}, at least if $E_0=0$ and $E_n>0$ for $n>0$:
\be
\br \psi^{(j)}(J,\gamma;x),H_j\varphi^{(j)}(J,\gamma;x)\kt=J.
\label{49}\en
On the negative side, it is not a big surprise the fact that these states are not eigenstates of any of  the operators $A$, $B$, $A^\dagger$ or $B^\dagger$. In fact, this does not even hold for the standard Gazeau-Klauder coherent states. The reason is simple: except that for some particular situation, these operators are not at all lowering operators. This is true for pseudo-bosons, \cite{baginbagbook}, but not in general. However, as in \cite{gk}, some $\gamma$-depending lowering operators can be defined, via their action on the bases $\F_{\varphi(j)}$ and $\F_{\psi(j)}$. For instance, if we put
\be
a_j(\gamma)\varphi_n^{(j)}=\left\{
\begin{array}{ll}
	0 \hspace{45mm}\mbox{ if }  n=0\\
	\sqrt{E_n}\,e^{i(E_n-E_{n-1})\gamma}\varphi_{n-1}^{(j)}\hspace{8mm}\mbox{ if } n\geq 1,\\
\end{array}
\right.\label{410}\en
 we find that
\be
a_j(\gamma)\varphi^{(j)}(J,\gamma;x)=\sqrt{J}\,\varphi^{(j)}(J,\gamma;x),
\label{411}\en
$j=1,2$. Similarly, if we define 
\be
b_j^\dagger(\gamma)\psi_n^{(j)}=\left\{
\begin{array}{ll}
	0 \hspace{45mm}\mbox{ if }  n=0\\
	\sqrt{E_n}\,e^{i(\overline{E_n}-\overline{E_{n-1}})\gamma}\psi_{n-1}^{(j)}\hspace{8mm}\mbox{ if } n\geq 1,\\
\end{array}
\right.\label{412}\en
then
\be
b_j^\dagger(\gamma)\psi^{(j)}(J,\gamma;x)=\sqrt{J}\,\psi^{(j)}(J,\gamma;x),
\label{413}\en
$j=1,2$. 

It is worth noticing that in (\ref{412}) only the $E_n$'s in  the exponent  are replaced by their complex conjugates, while $\sqrt{E_n}$ appears both in (\ref{410}) and in (\ref{412}). This is related to the definitions in (\ref{41}) where we have used $\sqrt{\rho_n}$ in the denominator for both vectors. This was meant to simplify the computations, by getting $|\rho_n|$ rather that $\sqrt{\rho_n^2}=\rho_n$, for instance, when solving the moment problem (\ref{46}). In fact, in this case, we should have a complex $\rho(J)$, which we prefer to avoid.  

\vspace{2mm}

{\bf Remarks:--} (1) Even if $A$ and $B$, in general, act on our  bicoherent states in a rather complicated way, there are few situations in which interesting formulas can be deduced. This is, for instance,
 when $\alpha_n=E_n$ and $\beta_n=1$, for all $n$ in (\ref{27}). In this case we find that
 \be
 A\varphi^{(1)}(J,\gamma;x)= i\,\frac{d}{d\gamma}\varphi^{(2)}(J,\gamma;x), \qquad  B\varphi^{(2)}(J,\gamma;x)= \varphi^{(1)}(J,\gamma;x).
 \label{414}\en
 If we rather have $\alpha_n=1$ and $\beta_n=E_n$, we get 
\be
A\varphi^{(1)}(J,\gamma;x)= \varphi^{(2)}(J,\gamma;x), \qquad  B\varphi^{(2)}(J,\gamma;x)= i\,\frac{d}{d\gamma} \varphi^{(1)}(J,\gamma;x).
\label{415}\en

\vspace{2mm}

(2) In \cite{bgs,bgs2} we have considered the (1-d and 2-d) Swanson model, whose Hamiltonian is, in its 1-d version,
$$
H_{\theta}=\frac{1}{2\cos(2\theta)}\left(\hat p^2e^{-2i\theta}+\hat q^2e^{2i\theta}\right).$$ %
Here  $\hat q$ and $\hat p$ are the position and momentum operators, and $\theta$ is a real parameter taking values in $(-\frac{\pi}{4},\frac{\pi}{4})\backslash\{0\}$. It is clear that $H_\theta\neq H_\theta^\dagger$, and it is known that it can be written, except that for a  constant, in a factorized form. In fact, introducing the pair of pseudo bosonic operators
\be
\hat a=\frac{1}{\sqrt{2}}\left(\hat q_0 e^{i\theta}+i\hat p_0e^{-i\theta}\right),\quad \hat b=\frac{1}{\sqrt{2}}\left(\hat q_0 e^{i\theta}-i\hat p_0e^{-i\theta}\right),
\label{sw1},\en
see \cite{baginbagbook}, they satisfy
\be
[\hat a,\hat b]=\1,\quad \hat a^\dag\neq \hat b,
\label{sw2}\en
and moreover
$$H_\theta=\frac{1}{\cos(2\theta)}\left(\hat b \hat a+\frac{1}{2}\1\right).$$
As shown in \cite{baginbagbook} the  eigenstates of $H_\theta$ and $H_\theta^\dagger$ are respectively
\be
\varphi_n(x)=\frac{N_1}{\sqrt{2^nn!}}H_n(e^{i\theta}x)\textrm{exp}\left\{-\frac{1}{2}e^{2i\theta}x^2\right\},\quad\Psi_n(x)=\frac{N_2}{\sqrt{2^nn!}}H_n(e^{-i\theta}x)\textrm{exp}\left\{-\frac{1}{2}e^{-2i\theta}x^2\right\}
\en
for all $n\geq0$, with $N_1\bar{N}_2=\frac{e^{-i\theta}}{\sqrt{\pi} }$ in order to have $\left<\varphi_{ 0},\Psi_0\right>=1$. In \cite{bgs} we have shown that the bounds in (\ref{43}) (for $j=1$) are satisfied and that $J_{min}=\infty$. Therefore, $\varphi^{(1)}(J,\gamma;x)$ and $\psi^{(1)}(J,\gamma;x)$ are well defined for all $(J,\gamma)\in\mathbb{R}^2$. As for the other pair, $\varphi^{(2)}(J,\gamma;x)$ and $\psi^{(2)}(J,\gamma;x)$, we cannot apply our previous results as they are, since the Susy partner of $H_\theta$, which is essentially $\hat a\hat b$, has eigenvalues which are shifted with respect to those of $H_\theta$. 

\vspace{2mm}

(3) Examples of Gazeau-Klauder-like bicoherent states can be easily constructed by any broken Susy, adopting the  approach 
considered in Section \ref{sectdeformSUSY}. The inequalities in (\ref{37}) ensure the validity of those in (\ref{43}) with $A_\varphi=e^M$, $A_\psi=e^{-m}$, and $r_\varphi=r_\psi=M_n(\varphi)=M_n(\psi)=1$, $\forall n$. Hence $J_{min}=R=\lim_{n,\infty}E_n$, and for those problems for which the moment problem can be solved, the resolution of the identity in (\ref{47}) follows. Examples of broken Susy can be found, for instance, in \cite{jun,jun2,jun3}

\section{Examples}\label{sectPBexample}

As we have anticipated, if the operators $A$ and $B$ in (\ref{21}) are pseudo-bosonic, \cite{baginbagbook}, they can also be used to construct explicitly the different families of eigenvectors considered all along this paper. The price to pay, however, is that the eigenvalues of $H_j$ and $H_j^\dagger$ are essentially linear in the quantum number, since these Hamiltonians are, in general, directly proportional to certain pseudo-bosonic number operators. In this section we will consider an examples of this kind, and an interesting generalization of it. However, before going to the {\em simple} pseudo-bosonic settings, we will discuss some results related to the Black-Scholes equation, for which pseudo-bosonic ladder operators cannot be introduced.

\subsection{The Black-Scholes Hamiltonian}

The starting point is  the Black-Scholes equation for option pricing with constant volatility $\sigma$,
\be
\frac{\partial C}{\partial t}=-\frac{1}{2}\sigma^2 S^2 \frac{\partial^2 C}{\partial S^2}-rS\frac{\partial C}{\partial S}-rC.
\label{51}
\en
Here $C(S,t)$ is the price of the option, $S$ is the stock price and $r$ is the risk-free spot interest rate. Equation (\ref{51}) describes the price of an option in absence of uncertainty, when there is no random fluctuations ( {\em perfectly hedged portfolio}). A full understanding of the meaning of this equation is outside the scopes of this paper.  We refer to \cite{bs,baaquie} for an overview on (\ref{51}), its derivation and its role in Finance, and to \cite{roy2,baaquie,bag2016} for its connections with Quantum Mechanics.

 Introducing a new variable $x$ via $S=e^x$, and the related unknown function $\Psi(x)$ as $C(S(x),t)=e^{\epsilon t}\Psi(x)$, equation (\ref{51}) can be rewritten in the following form:
\be
H_{BS}\Psi(x)=\epsilon\Psi(x), \qquad \mbox{where}\qquad H_{BS}=-\frac{1}{2}\sigma^2  \frac{d^2 }{d x^2}+\left(\frac{\sigma^2}{2}-r\right)\frac{d}{dx}+r\,\1.
\label{52}\en
Notice that $H_{BS}\neq H_{BS}^\dagger$, because of the presence of the term $\left(\frac{\sigma^2}{2}-r\right)\frac{d}{dx}$. In \cite{roy,roy2,baaquie,bag2016} this Hamiltonian has been mapped into a formally Hermitian operator, using a suitable similarity multiplication operator which, contrarily to $T$ and $T^{-1}$ in (\ref{31}), is unbounded with unbounded inverse. This kind of transformation is not what we are interested in, here. In fact, $H_{BS}$ as it is, looks exactly as the Hamiltonians in (\ref{22}). For concreteness, we fix $\sigma^2=2$ to simplify the notation, and we call $H_1$ the Hamiltonian we get in this way:
\be
H_1=-\frac{d^2}{dx^2}+(1-r)\frac{d}{dx}+r\1.
\label{53}\en
Our aim is to use our results to analyse $H_1$, and its related Hamiltonians. Comparing (\ref{53}) with (\ref{22}) we see that $q_1(x)=1-r=w_B(x)-w_A(x)$, and $V_1(x)=r=w_A(x)w_B(x)-w'_A(x)$. These equations return $w_A(x)$ and $w_B(x)$ and the solution depends on whether $r=-1$ or not. In particular, if $r\neq-1$, calling
$$
v(x)=v_0e^{-(r+1)x}
-\frac{1}{r+1},$$
we find
\be
w_A(x)=r+\frac{1}{v(x)}, \qquad w_B(x)=1+\frac{1}{v(x)}.
\label{54}\en
Here $v_0$ is an integration constant, which we always take strictly positive, to fix the ideas. As for $V_2(x)$ we find that
\be
V_2(x)=\frac{1}{v^2(x)}\left(rv^2(x)+2v(x)(r+1)+2\right),
\label{55}\en
which diverges when $x$ approaches $x_0:=\frac{1}{r+1}\log((r+1)v_0)$. We see that $V_1(x)$ and $V_2(x)$ look rather different. However, as $|x|$ diverges, $V_2(x)$ converges to $r$. Hence, asymptotically, $V_1(x)$ and $V_2(x)$ are, in fact, not so different.

Let us now see what happens if $r=-1$. In this case we still have (\ref{54}), but $v(x)=v_0-x$. As before, $\lim_{|x|,\infty}V_2(x)=-1=r=V_1(x)$, and the two potentials are asymptotically quite close. This is because we find that $V_2(x)=\frac{2}{(x-v_0)^2}-1$. What we want to do now is to look for the vacua of the various operators introduced in Section \ref{sect2}, and to check whether is possible that they are all in $\Lc^2(\mathbb{R})$. We first recall that
$$
\varphi_0^{(1)}(x)=N_{\varphi(1)}\exp\left\{-\int w_A(x)dx \right\}, \qquad \varphi_0^{(2)}(x)=N_{\varphi(2)}\exp\left\{\int w_B(x)dx \right\}.
$$
Now, if $r\neq-1$,
$$
-\int w_A(x)dx=-rx+\log\left|e^{(r+1)x}-v_0(r+1)\right|,
$$
while
$$
\int w_B(x)dx=x-\log\left|e^{(r+1)x}-v_0(r+1)\right|.
$$
On the other hand, if $r=-1$ we get
$$
-\int w_A(x)dx=x+\log\left|x-v_0\right|,
\qquad 
\int w_B(x)dx=x-\log\left|x-v_0\right|.
$$
If $r>-1$, it is possible to check that $\varphi_0^{(1)}(x)$ behaves as $e^x$ for $x\rightarrow\infty$ and as $e^{-rx}$ for $x\rightarrow-\infty$. Hence $\varphi_0^{(1)}(x)\notin\Lc^2(\mathbb{R})$. On the other hand, $\varphi_0^{(2)}(x)$ behaves as $e^{-rx}$ for $x\rightarrow\infty$ and as $e^{x}$ for $x\rightarrow-\infty$. This implies that, if $r>0$ (then $r>-1$ as well, of course), $\varphi_0^{(2)}(x)\in\Lc^2(\mathbb{R})$. Of course, since $B\varphi_0^{(2)}(x)$ is proportional to $\varphi_0^{(1)}(x)$, see (\ref{27}), and since $\varphi_0^{(1)}(x)\notin\Lc^2(\mathbb{R})$, our result implies that $\varphi_0^{(2)}(x)\notin D(B)$, the domain of $B$. 

If $r=-1$ it is easy to check that both  $\varphi_0^{(1)}(x)$ and  $\varphi_0^{(2)}(x)$ behave as $e^x$, for large $|x|$. This implies that none of these vacua belong to $\Lc^2(\mathbb{R})$.

Finally, if $r<-1$, $\varphi_0^{(1)}(x)$ goes as $e^x$ for $x\rightarrow-\infty$ and as $e^{-rx}$ for $x\rightarrow+\infty$, while $\varphi_0^{(2)}(x)$ behaves as $e^{-rx}$ for $x\rightarrow-\infty$ and as $e^{x}$ for $x\rightarrow+\infty$. Then, since $r<-1$, neither  $\varphi_0^{(1)}(x)$ nor  $\varphi_0^{(2)}(x)$ are square integrable.

Let us now consider the adjoint Hamiltonians $H_1^\dagger$ and $H_2^\dagger$. First of all it is clear that, taking $v_0\in\mathbb{R}$, $w_A(x)$ and $w_B(x)$ are real. Therefore $q_1(x)=\overline{q_1(x)}$, $V_1(x)=\V_1(x)$ and $V_2(x)=\V_2(x)$. However, this does not mean that $H_j=H_j^\dagger$, of course, due to the different sign in the term linear in the $x$-derivative, see (\ref{22}) and (\ref{28}). The ground states  of $H_1^\dagger=A^\dagger B^\dagger$ and $H_2^\dagger=B^\dagger A^\dagger$ can be deduced looking for the vacua of $B^\dagger$ and $A^\dagger$ respectively: $B^\dagger\psi_0^{(1)}(x)=0$ and $A^\dagger\psi_0^{(2)}(x)=0$ , and we find
$$
\psi_0^{(1)}(x)=N_{\psi(1)}\exp\left\{-\int w_B(x)dx \right\}, \qquad \psi_0^{(2)}(x)=N_{\psi(2)}\exp\left\{\int w_A(x)dx \right\},
$$
which can be easily related to $\varphi_0^{(1)}(x)$ and $\varphi_0^{(2)}(x)$. Indeed we find
$$
\psi_0^{(1)}(x)\varphi_0^{(2)}(x)=N_{\psi(1)}N_{\varphi(2)}, \qquad \psi_0^{(2)}(x)\varphi_0^{(1)}(x)=N_{\psi(2)}N_{\varphi(1)}.
$$
Now, to analyze their asymptotic behaviour, we first consider the case $r>-1$. Of course, since $\varphi_0^{(2)}(x)\rightarrow 0$ for $x\rightarrow-\infty$,  $\psi_0^{(1)}(x)\rightarrow \infty$ in the same limit. Hence, independently of the choice of $r$, $\psi_0^{(1)}(x)\notin\Lc^2(\mathbb{R})$. The situation changes for $\psi_0^{(2)}(x)$. In fact, as we have seen,  $\varphi_0^{(1)}(x)\rightarrow \infty$ for $|x|\rightarrow\infty$, if $r>0$. Then $\psi_0^{(2)}(x)\rightarrow 0$ in the same limit. The conclusion is the following:

if $r>0$ the ground states of $H_2$ and $H_2^\dagger$, $\varphi_0^{(2)}(x)$ and $\psi_0^{(2)}(x)$, are square integrable. On the other hand, for any $r>-1$, neither $\varphi_0^{(1)}(x)$ nor $\psi_0^{(1)}(x)$ belong to $\Lc^2(\mathbb{R})$. 

A similar analysis can be repeated for the other values of $r$: if $r\leq-1$, none of the $\psi_0^{(j)}(x)$'s is square-integrable.

Summarizing, if we want our system to live in $\Lc^2(\mathbb{R})$ (assumption which, however, could be relaxed, see \cite{bgr,bagweak}), we cannot work directly with the Black-Scholes Hamiltonian but, rather, with its Susy partner $H_2$, together with the adjoint of $H_2$, at least  if $r>0$. Of course, our analysis should be refined if we also want to analyze  eigenvectors 
different from the lowest ones. However, this analysis is much harder, since there is no general way to construct them. In particular, the commutator between $A$ and $B$ is not so easy and does not allow to use them as ladder operators.

\subsection{Pseudo-bosonic superpotentials}

In this example we will discuss a non-trivial choice of $w_A(x)$ and $w_B(x)$ such that $[A,B]=\1$, see (\ref{25}). In particular, we want the two functions to be different and not just linear in $x$ (this choice has been considered several times in recent years, see \cite{baginbagbook} and references therein). To be concrete, we take here
\be
w_A(x)=k+e^{x}, \qquad w_B(x)=x-e^x,
\label{56}\en
where we take $k\in\mathbb{R}$ for concreteness. Hence
$$
A=\frac{d}{dx}+k+e^x, \qquad B=-\frac{d}{dx}+x-e^x.
$$
 With this choice we have $w_A'(x)+w_B'(x)=1$. Therefore $[A,B]=\1$, and
\be
q_1(x)=x-k-2e^x, \quad V_1(x)=kx+(x-k-1)e^x-e^{2x}, \quad V_2(x)=V_1(x)+1,
\label{57}\en
in agreement with the fact that $H_2=AB=[A,B]+BA=H_1+\1$. Needless to say, a similar relation is also recovered for $H_1^\dagger$ and $H_2^\dagger$: $H_2^\dagger=H_1^\dagger+\1$. It is clear that $\varphi_n^{(2)}(x)$ coincide with  $\varphi_n^{(1)}(x)$, but the eigenvalues are shifted. Similarly, $\psi_n^{(2)}(x)$ coincide with  $\psi_n^{(1)}(x)$, and again the eigenvalues are shifted. This reminds very much what happens in unbroken SUSY. 

\vspace{2mm}

{\bf Remark:--} This result does not really depend on the particular choice of the superpotential in (\ref{56}). Indeed, it is  a consequence of the fact that $[A,B]=\1$, which implies that $(H_2,H_2^\dagger)=(H_1+\1,H_1^\dagger+\1)$. For this reason in the rest of this example we will concentrate on what happens for $(H_1,H_1^\dagger)$, since this completely determines what happens also for the other pair, $(H_2,H_2^\dagger)$.

\vspace{2mm}

The ground state of $H_1$ can be deduced by solving the equation $A\varphi_0^{(1)}(x)=0$: $\varphi_0^{(1)}(x)=N_{\varphi(1)}e^{-kx-e^x}$. It is clear that, if $k<0$, $\varphi_0^{(1)}(x)\in\Lc^2(\mathbb{R})$. Since $B^\dagger=\frac{d}{dx}+x-e^x$, its vacuum turns out to be  $\psi_0^{(1)}(x)=N_{\psi(1)}e^{e^x-x^2/2}$, which is not in $\Lc^2(\mathbb{R})$. However, it is clear that $\psi_0^{(1)}(x)\in\Lc^1_{loc}(\mathbb{R})$, the set of all locally integrable functions, and that it is also a $C^\infty$ function: it admits derivatives of all order. It is also clear that the product $\varphi_0^{(1)}(x)\psi_0^{(1)}(x)$ belongs to $\Lc^1(\mathbb{R})$, so that the ordinary scalar product between two $\Lc^2$ functions can also be defined for this product. In fact, we will see later that this is true for all products of the kind $\varphi_k^{(1)}(x)\psi_l^{(1)}(x)$, where $\varphi_k^{(1)}(x)$ are the eigenstates of $H_1$ (all in $\Lc^2(\mathbb{R})$) while, $\psi_l^{(1)}(x)$ are the (generalized) eigenstates of $H_1^\dagger$ (none in $\Lc^2(\mathbb{R})$). This is what happens, in general, in the so-called PIP-spaces\footnote{Here PIP stands for partial inner product.}, \cite{anttra}. This reflects the general structure of {\em weak pseudo-bosons}, recently introduced in \cite{bagweak}.

Motivated by our results in \cite{baginbagbook,bagweak}, we can prove the following:

\begin{thm}
If we put
\be
\varphi_n^{(1)}(x)=\frac{1}{\sqrt{n!}}B^n\varphi_0^{(1)}(x), \qquad \psi_n^{(1)}(x)=\frac{1}{\sqrt{n!}}{A^\dagger}^n\psi_0^{(1)}(x),
\label{58}\en
$n=1,2,3,\ldots$, we have 
\be
\frac{\varphi_n^{(1)}(x)}{\varphi_0^{(1)}(x)}=\frac{\psi_n^{(1)}(x)}{\psi_0^{(1)}(x)}=p_n(x),
\label{59}\en
$\forall n\geq0$, where $p_n(x)$ is defined recursively as follows:
\be
p_0(x)=1, \qquad p_{n}(x)=\frac{1}{\sqrt{n}}\left(p_{n-1}(x)(x+k)-p_{n-1}'(x)\right).
\label{510}\en
Moreover,  $\varphi_n^{(1)}(x)\in\Lc^2(\mathbb{R})$, while  $\psi_n^{(1)}(x)\in\Lc^1_{loc}(\mathbb{R})$, $\forall n\geq0$.

\end{thm}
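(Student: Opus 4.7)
The plan is to prove the three assertions — identification of the ratios $\varphi_n^{(1)}/\varphi_0^{(1)}$ and $\psi_n^{(1)}/\psi_0^{(1)}$ with a common polynomial $p_n$, the recursion (\ref{510}), and the function-space conclusions — by a single induction on $n$, driven by a short \emph{separation lemma} that reduces the actions of $B$ and of $A^\dagger$ on the relevant multiples to \emph{the same} scalar first-order differential operator.

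First I would establish, for any smooth $f$, the two identities
\[
B\bigl(f\,\varphi_0^{(1)}\bigr) = \bigl((w_A+w_B)\,f - f'\bigr)\,\varphi_0^{(1)}, \qquad A^\dagger\bigl(f\,\psi_0^{(1)}\bigr) = \bigl((w_A+w_B)\,f - f'\bigr)\,\psi_0^{(1)}.
\]
Both are direct computations that use $w_A,w_B\in\mathbb{R}$ and the vacuum equations $\varphi_0^{(1)\prime}=-w_A\,\varphi_0^{(1)}$ (from $A\varphi_0^{(1)}=0$) and $\psi_0^{(1)\prime}=-w_B\,\psi_0^{(1)}$ (from $B^\dagger\psi_0^{(1)}=0$). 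The decisive point is that both yield \emph{the same} first-order operator $L:f\mapsto (w_A+w_B)\,f-f' = (x+k)\,f-f'$, because in this example $w_A(x)+w_B(x)=x+k$.

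The induction is then immediate. Assume $\varphi_n^{(1)}=p_n\,\varphi_0^{(1)}$ and $\psi_n^{(1)}=p_n\,\psi_0^{(1)}$ with a common $p_n$. From the definitions in (\ref{58}) and the separation lemma,
\[
\varphi_{n+1}^{(1)} = \tfrac{1}{\sqrt{n+1}}\,B\varphi_n^{(1)} = \tfrac{1}{\sqrt{n+1}}\,(Lp_n)\,\varphi_0^{(1)}, \qquad \psi_{n+1}^{(1)} = \tfrac{1}{\sqrt{n+1}}\,A^\dagger\psi_n^{(1)} = \tfrac{1}{\sqrt{n+1}}\,(Lp_n)\,\psi_0^{(1)},
\]
so both ratios equal the same function $p_{n+1}=\tfrac{1}{\sqrt{n+1}}\,Lp_n$, which is precisely the recursion (\ref{510}). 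A side induction — trivial since $L$ raises polynomial degree by exactly one (multiplication by $x+k$ adds one, differentiation lowers it) — confirms that $p_n$ is a polynomial of degree $n$.

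For the function-space claims I would only read off asymptotics. Since $\varphi_0^{(1)}(x)=N_{\varphi(1)}e^{-kx-e^x}$ decays super-exponentially as $x\to+\infty$ (dominated by $-e^x$) and, for $k<0$, also as $x\to-\infty$ (since $-kx=|k|x\to-\infty$), any polynomial multiple $p_n\,\varphi_0^{(1)}$ lies in $\Lc^2(\mathbb{R})$. And because $\psi_0^{(1)}(x)=N_{\psi(1)}e^{e^x-x^2/2}$ is smooth — hence bounded on every compact subset of $\mathbb{R}$ — the product $p_n\,\psi_0^{(1)}$ is continuous and therefore in $\Lc^1_{loc}(\mathbb{R})$, even though it blows up at $+\infty$. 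I do not expect any serious obstacle; the only subtle point is the observation that the sums $w_A+w_B$ appearing in the two parts of the separation lemma coincide, which is exactly what forces the same polynomial $p_n$ to play both roles.
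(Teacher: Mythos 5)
Your proposal is correct and follows essentially the same route as the paper: an induction on $n$ in which the vacuum equations $A\varphi_0^{(1)}=0$ and $B^\dagger\psi_0^{(1)}=0$ let you pull $B$ (respectively $A^\dagger$) through the product $p_n\varphi_0^{(1)}$ (respectively $p_n\psi_0^{(1)}$), yielding the recursion $p_{n+1}=\frac{1}{\sqrt{n+1}}\left((x+k)p_n-p_n'\right)$, with the $\Lc^2$ and $\Lc^1_{loc}$ claims read off from the asymptotics of the vacua times a polynomial. Your only organizational difference is packaging both computations into one ``separation lemma'' exploiting $w_A+w_B=x+k$, where the paper carries out the $\varphi$-case explicitly and notes the $\psi$-case is analogous.
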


\begin{proof}
To prove (\ref{59}) we use induction on $n$. Of course, the statement is true for $n=0$. Let us assume that it is also true for a given $n$, with $p_n(x)$ as in (\ref{510}), and let us tehn prove that the analogous result holds for $n+1$. 

Due to (\ref{58}) and to the definition of $B$ we have
$$
\sqrt{n+1}\,\varphi_{n+1}^{(1)}(x)=B\varphi_n^{(1)}(x)=\left(-\frac{d}{dx}+x-e^x\right)\varphi_n^{(1)}(x)=\left(-\frac{d}{dx}+x-e^x\right)\left(p_n(x)\varphi_0^{(1)}(x)\right),
$$
where we have also used the induction hypothesis. Now, since $A\varphi_0^{(1)}(x)=0$, we have $\frac{d}{dx}\varphi_0^{(1)}(x)=-(k+e^x)\varphi_0^{(1)}(x)$. Therefore, after some computations, 
$$
\sqrt{n+1}\,\varphi_{n+1}^{(1)}(x)=\left(p_n(x)(x+k)-p_n'(x)\right)\varphi_0^{(1)}(x)=\sqrt{n+1}\,p_{n+1}(x)\varphi_0^{(1)}(x),
$$
which is what we had to prove. A similar proof can be repeated for $\psi_n^{(1)}(x)$. As for the nature of the functions, this is an obvious consequence of (\ref{59}) and of the fact that $p_n(x)$ is a polynomial.

\end{proof}

From this theorem the following result can be deduced:

\begin{cor}\label{cor3}
	If we take $N_{\varphi(1)}N_{\psi(1)}=\left(2\pi e^{k^2}\right)^{-1/2}$, then  $\varphi_n^{(1)}(x)\psi_m^{(1)}(x)\in\Lc^1(\mathbb{R})$ and
	\be
	\langle  \varphi_n^{(1)},\psi_n^{(1)} \rangle =\delta_{n,m},
	\label{511}\en
	for all $n,m\geq0$.
\end{cor}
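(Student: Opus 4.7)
The plan is to make everything very explicit using the formulas from the preceding theorem and to reduce the computation to the standard orthogonality of Hermite polynomials. By (\ref{58})--(\ref{59}) we have
$$
\varphi_n^{(1)}(x)=N_{\varphi(1)}\,p_n(x)\,e^{-kx-e^x},\qquad \psi_n^{(1)}(x)=N_{\psi(1)}\,p_n(x)\,e^{e^x-x^2/2},
$$
so the first job is to compute the product: the $e^x$-terms in the exponents cancel exactly, leaving
$$
\varphi_n^{(1)}(x)\,\overline{\psi_m^{(1)}(x)}=N_{\varphi(1)}N_{\psi(1)}\,p_n(x)p_m(x)\,e^{-kx-x^2/2}
$$
(the $N$'s are taken real). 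This is a polynomial times a Gaussian, hence in $\Lc^1(\mathbb{R})$, which settles the first claim.

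For the biorthogonality I would complete the square, $-kx-\frac{x^2}{2}=-\frac{(x+k)^2}{2}+\frac{k^2}{2}$, and change variable $y=x+k$. Defining $q_n(y):=p_n(y-k)$, the recursion (\ref{510}) transforms into
$$
q_0(y)=1,\qquad q_n(y)=\frac{1}{\sqrt{n}}\bigl(y\,q_{n-1}(y)-q_{n-1}'(y)\bigr),
$$
which is precisely the normalized form of the probabilist's Hermite recursion. Hence $q_n(y)=He_n(y)/\sqrt{n!}$, and the classical identity $\int_{\mathbb R}He_n(y)He_m(y)\,e^{-y^2/2}\,dy=\sqrt{2\pi}\,n!\,\delta_{n,m}$ yields
$$
\int_{\mathbb R}p_n(x)p_m(x)\,e^{-kx-x^2/2}\,dx=e^{k^2/2}\int_{\mathbb R}q_n(y)q_m(y)\,e^{-y^2/2}\,dy=\sqrt{2\pi\,e^{k^2}}\,\delta_{n,m}.
$$

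Combining, $\langle\varphi_n^{(1)},\psi_m^{(1)}\rangle=N_{\varphi(1)}N_{\psi(1)}\sqrt{2\pi\,e^{k^2}}\,\delta_{n,m}$, so the prescribed choice $N_{\varphi(1)}N_{\psi(1)}=(2\pi e^{k^2})^{-1/2}$ is exactly what is needed to get $\delta_{n,m}$. There is no real obstacle in the argument: the one point that must be checked carefully is that the recursion for $q_n$ really is the Hermite recursion with the stated normalization, which is a direct induction from (\ref{510}). Everything else is a routine computation, and the fact that the $e^{e^x}$ factors cancel between $\varphi$ and $\psi$ is the structural reason the pairing lands in $\Lc^1(\mathbb{R})$ even though neither $\psi_n^{(1)}$ is in $\Lc^2(\mathbb{R})$.
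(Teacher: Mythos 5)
Your proof is correct, but it follows a different route from the paper's. You identify the polynomials directly: after the shift $y=x+k$, the recursion (\ref{510}) becomes the probabilist's Hermite recursion, so $p_n(x)=He_n(x+k)/\sqrt{n!}$, and the biorthogonality then drops out of the classical identity $\int_{\mathbb{R}}He_n(y)He_m(y)e^{-y^2/2}dy=\sqrt{2\pi}\,n!\,\delta_{n,m}$ together with completing the square. The paper instead keeps the argument self-contained: it establishes a Rodrigues-type formula $(-1)^m\sqrt{m!}\,p_m(x)=e^{x^2/2+kx}\frac{d^m}{dx^m}e^{-x^2/2-kx}$ (stated as provable by induction, see (\ref{512})), kills the off-diagonal terms by repeated integration by parts, and computes the diagonal ones via an inductive proof that $\frac{d^np_n}{dx^n}=\sqrt{n!}$; in effect it re-derives Hermite orthogonality from scratch. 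Your shortcut buys brevity and, as a bonus, it yields the explicit Hermite representation of $p_n$ that the paper only records afterwards in (\ref{517}) --- and with the correct normalization: your computation gives $p_n(x)=2^{-n/2}(n!)^{-1/2}H_n\bigl((x+k)/\sqrt{2}\bigr)$, whereas the prefactor $1/(2^n n!)$ printed in (\ref{517}) does not reproduce $p_1(x)=x+k$, so your version is the consistent one. The paper's longer argument buys independence from the classical Hermite facts (at the price of the unproven induction for (\ref{512})). Both treatments of the $\Lc^1$ claim are identical: the $e^{\pm e^x}$ factors cancel, leaving a polynomial times a Gaussian. One cosmetic point: since $k\in\mathbb{R}$ and the normalization constants can be taken real, all functions are real-valued, so your placement of the complex conjugation in the pairing is immaterial here, but it is worth saying so explicitly.
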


\begin{proof}
First of all we observe that, because of (\ref{59}) and of the explicit forms for $\varphi_0^{(1)}(x)$ and $\psi_0^{(1)}(x)$, we have
$$
\varphi_n^{(1)}(x)\psi_m^{(1)}(x)=N_{\varphi(1)}N_{\psi(1)}p_n(x)p_m(x)e^{-x^2/2-kx}=\left(2\pi e^{k^2}\right)^{-1/2}p_n(x)p_m(x)e^{-x^2/2-kx},
$$
which is integrable for all $n$ and $m$. Therefore $\langle  \varphi_n^{(1)},\psi_m^{(1)} \rangle$ is well defined, despite of the fact that $\psi_m^{(1)}(x)\notin\Lc^2(\mathbb{R})$. Next, we show that this scalar product is zero if $n\neq m$. This can be proved by showing first the following equality, which strongly recall a similar equality for the Hermite polynomials (which are indeed quite related to our $p_n(x)$, as we will show later):
\be
(-1)^m\sqrt{m!}\,p_m(x)=e^{x^2/2+kx}\left(\frac{d^m}{dx^m}e^{-x^2/2-kx}\right),
\label{512}\en
$m=0,1,2,3,\ldots$. This equality, again, can be proved by induction, but we will not do it here.

Now, let us suppose that $m>n$. Then we have
$$
\langle  \varphi_n^{(1)},\psi_m^{(1)} \rangle=\left(2\pi e^{k^2}\right)^{-1/2}\int_{\mathbb{R}}p_n(x)p_m(x)e^{-x^2/2-kx}dx=$$
$$=
\left(2\pi e^{k^2}\right)^{-1/2}\frac{(-1)^m}{\sqrt{m!}}\int_{\mathbb{R}}p_n(x)e^{x^2/2+kx}\left(\frac{d^m}{dx^m}e^{-x^2/2-kx}\right)e^{-x^2/2-kx}dx=$$
$$=\left(2\pi e^{k^2}\right)^{-1/2}\frac{(-1)^m}{\sqrt{m!}}\int_{\mathbb{R}}p_n(x)\left(\frac{d^m}{dx^m}e^{-x^2/2-kx}\right)dx,
$$
which is zero. In fact, using integration by parts, we have to compute $\frac{d^mp_n(x)}{dx^m}$, which is zero since $m>n$. Of course, the same proof works also if $m<n$, by inverting the role of $p_n(x)$ and $p_m(x)$. 

Let us now see what happens if $n=m$. In this case, integrating by parts, we have
$$
\langle  \varphi_n^{(1)},\psi_n^{(1)} \rangle=\left(2\pi e^{k^2}\right)^{-1/2}\frac{(-1)^n}{\sqrt{n!}}\int_{\mathbb{R}}p_n(x)\left(\frac{d^n}{dx^n}e^{-x^2/2-kx}\right)dx=$$
$$=\frac{\left(2\pi e^{k^2}\right)^{-1/2}}{\sqrt{n!}}\int_{\mathbb{R}}\frac{d^np_n(x)}{dx^n}e^{-x^2/2-kx}dx.
$$
It is obvious that, for all $n\geq0$, $\frac{d^np_n(x)}{dx^n}$ is a constant. In fact, we can check that this constant is $\sqrt{n!}$: $\frac{d^np_n(x)}{dx^n}=\sqrt{n!}$. The proof is based on the following preliminary result:
\be
\frac{d^n}{dx^n}\left[p_n'(x)(x+k)\right]=np_n^{(n)}(x),
\label{514}\en
which is a direct consequence of the formula $\frac{d^n}{dx^n}(f(x)g(x)=\sum_{k=0}^{n}\left(
\begin{array}{c}
n  \\
k  \\
\end{array}
\right)f^{(n-k)}(x)g^{}(k)(x)$, that in our case reduces to just two contributions, since $(x+k)^{{(n)}}=0$ if $n\geq2$. We are now ready to prove our claim by induction on $n$. First of all, since $p_0(x)=1$, it is true that $\frac{d^0p_n(x)}{dx^0}=\sqrt{0!}$. Now, let us suppose we have $\frac{d^np_n(x)}{dx^n}=\sqrt{n!}$. We want to check that this equality extends to $n+1$. For that we use (\ref{510}). Therefore
$$
\frac{d^{n+1}}{dx^{n+1}}p_{n+1}(x)=\frac{d^{n+1}}{dx^{n+1}}\frac{1}{\sqrt{n+1}}\left(p_{n}(x)(x+k)-p_{n}'(x)\right)=\frac{1}{\sqrt{n+1}}\frac{d^{n+1}}{dx^{n+1}}p_{n}(x)(x+k),
$$
since $p_n'(x)$ is a polynomial af degree $n-1$, which is annihilated by the $n+1$-th derivative. Now, 
$$
\frac{d^{n+1}}{dx^{n+1}}p_{n+1}(x)=\frac{1}{\sqrt{n+1}}\frac{d^{n}}{dx^{n}}\left(p_{n}'(x)(x+k)+p_n(x)\right)=\frac{1}{\sqrt{n+1}}\left(np_n^{(n)}(x)+p_n^{(n)}\right)=\sqrt{(n+1)!},
$$
because of the (\ref{514}) and of our inductive assumption.

We are now ready to conclude our proof: 
$$
\langle  \varphi_n^{(1)},\psi_n^{(1)} \rangle=\left(2\pi e^{k^2}\right)^{-1/2}\int_{\mathbb{R}}e^{-x^2/2-kx}dx=1,
$$
which is what we had to prove.

\end{proof}

As we have already anticipated, the above result is not really surprising, since it is possible to deduce the following relation between our  $p_n(x)$ and the Hermite polynomials $H_n(x)$:
\be
p_n(x)=\frac{1}{2^n\,n!}H_n\left(\frac{x+k}{\sqrt{2}}\right),
\label{517}
\en
and therefore, because of (\ref{59}),
\be
\varphi_n^{(1)}=\frac{N_{\varphi(1)}}{2^n\,n!}H_n\left(\frac{x+k}{\sqrt{2}}\right)e^{-kx-e^x}, \qquad 
\psi_n^{(1)}=\frac{N_{\psi(1)}}{2^n\,n!}H_n\left(\frac{x+k}{\sqrt{2}}\right)e^{e^x-x^2/2},
\label{518}\en
where the normalization constants satisfy the condition given in Corollary \ref{cor3}. We conclude that the sets $\F_{\varphi(1)}$ and $\F_{\psi(1)}$ are biorthogonal, even if the functions in $\F_{\psi(1)}$ are not square-integrable. This suggests that $\F_{\varphi(1)}$, thought being complete\footnote{Completeness has been met previously in this paper. We remind that this means that the only square-integrable function $f(x)$ which is orthogonal to all the $\varphi_n^{(1)}(x)$'s is the zero function. This can be proved using the same argument adopted in \cite{baginbagbook}, see also \cite{kolfom}.} in $\Lc^2(\mathbb{R})$, it is not a basis. This is because, otherwise, an unique biorhogonal basis would exist in $\Lc^2(\mathbb{R})$, which is also a basis, \cite{chri}. On the other hand, here, we see that such a biorthogonal set exists, but this is not in $\Lc^2(\mathbb{R})$. Let then $\D_{\varphi(1)}=l.s.\{\varphi_n^{(1)}(x)\}$, the linear span of the functions $\varphi_n^{(1)}(x)$. This set is dense in $\Hil$, since $\F_{\varphi(1)}$ is complete. It is obvious that all functions in $\D_{\varphi(1)}$ can be written as
$$
f(x)=\sum_{k=0}^N\langle \psi_k^{(1)},f\rangle\varphi_k^{(1)}(x),
$$
for some finite $N$. Analogously, introducing the linear span of $ \psi_n^{(1)}(x)$, $\D_{\psi(1)}$, this is a subset of $\Lc^1_{loc}(\mathbb{R})$ (not dense, in general). Any function $g(x)$ in $\D_{\psi(1)}$ can be written in terms of the $\varphi_n^{(1)}(x)$:
$$
g(x)=\sum_{k=0}^M\langle \varphi_k^{(1)},g\rangle\psi_k^{(1)}(x),
$$
for some finite $M$. These formulas are not compatible with the general pseudo-bosonic framework described in \cite{baginbagbook}, while they fit well the weak pseudo-bosonic framework, \cite{bagweak}. In particular, for these $f(x)$ and $g(x)$, we deduce that $\langle f,g\rangle$ is well defined and that
\be
\langle f,g\rangle=\sum_{k=0}^{\min\{N,M\}}\langle f,\psi_k^{(1)}\rangle\langle \varphi_k^{(1)},g\rangle,
\label{519}\en
which is a sort of resolution of the identity restricted to suitable spaces. Of course, if $f(x),g(x)\in\D_{\varphi(1)}\cap\D_{\psi(1)}$, we get
$$
\langle f,g\rangle=\sum_{k=0}^{\min\{N,M\}}\langle f,\psi_k^{(1)}\rangle\langle \varphi_k^{(1)},g\rangle=\sum_{k=0}^{\min\{N,M\}}\langle f,\varphi_k^{(1)}\rangle\langle \psi_k^{(1)},g\rangle.
$$

\vspace{2mm}

An interesting question related to the example considered here is which is the role of the particular choice of the superpotentials $w_A(x)$ and $w_B(x)$ in (\ref{56}). In  fact, this is not so relevant. More in details, it is not hard to show, with the same techniques described above, that, for any choice of $w_A(x)$ and $w_B(x)$ such that $w_A(x)+w_B(x)=k+x$, $k\in\mathbb{R}$, formulas (\ref{59}) and (\ref{510}) are again satisfied. For this reason, the fact that $\varphi_n^{(1)}(x)$ or $\psi_n^{(1)}(x)$ are square-integrable or not, is related to the fact that $\varphi_0^{(1)}(x)$ or $\psi_0^{(1)}(x)$ are in $\Lc^2(\mathbb{R})$. The existence of $\langle  \varphi_n^{(1)},\psi_m^{(1)} \rangle$ can be proved in the same way, and a resolution like the one in (\ref{519}) can be again be deduced. This gives us a lot of freedom for producing examples based on the pseudo-bosonic commutation relations. We refer to \cite{bag2010} for more results on pseudo-bosons in this direction.

\section{Conclusions}\label{sectcon}

In this paper we have proposed the factorization of a given, non-Hermitian, Hamiltonian $H_1$ using two, rather than one, superpotentials. This allows, in principle, to have two vacua for $H_1$ and for its Susy counterpart, $H_2$. The eigenvectors for $H_1^\dagger$ and $H_2^\dagger$ have also been analysed, and a rich framework with two maps, the SUSY and the adjoint maps,  have been discussed in some details. We have shown how this structure can be deduced using a suitable deformation of ordinary SUSY qm. We have also discussed how bi-coherent states of the Gazeau-Klauder type can be defined, and which properties do they have. The Swanson Hamiltonian has been used as a test for our construction.

In the last part of the paper, we have discussed some applications of our general framework to the Black-Scholes Hamiltonian and to (weak) pseudo-bosonic systems.

Among other aspects, in our future research we plan to consider the role of shape-invariant potentials (if any), and analyse in more details the differences between broken and unbroken Susy. Also, we hope to refine the mathematical aspects of our approach, considering for instance what happens if the deformation operator $T$ in Section \ref{sectdeformSUSY},  its inverse, or both, are unbounded.

\section*{Acknowledgements}

This work was partially supported by the University of Palermo, and by the Gruppo Nazionale di Fisica Matematica of Indam.

\section*{Fundings}

No financial sources have to be acknowledged.

\section*{Conflicts of interest}

There are no conflicts of interest.

\section*{Availability of data and material}

Not applicable.

\section*{Code availability}

Not applicable.

\section*{Authors' contributions}

Not applicable.

\vspace{8mm}

\end{document}